\newcommand{\cpm}{\textsc{cpm}\xspace}
\newcommand{\cpmz}{\textsc{cpmz}\xspace}
\newcommand{\setz}{\texttt{Setz}\xspace}
\newcommand{\dict}{\texttt{Dict}\xspace}
\newcommand{\find}{\texttt{Find}\xspace}
\newcommand{\union}{\texttt{Union}\xspace}
\newcommand{\makeset}{\texttt{MakeSet}\xspace}
\newcommand{\add}{\texttt{Add}\xspace}
\newcommand{\uf}{\texttt{UF}\xspace}
\colorlet{sk}{blue}
\colorlet{cm}{purple}
\begin{document}
\title{Clique percolation method:
  memory efficient  almost exact communities}

\author{Alexis Baudin\inst{1} \and
  Maximilien Danisch\inst{1} \and
  Sergey Kirgizov\inst{2} \and
  Clémence Magnien\inst{1} \and
  Marwan Ghanem\inst{1}
}
\authorrunning{A. Baudin {\em et al.}}

\institute{Sorbonne Université, CNRS, LIP6, F-75005 Paris, France\\
  \email{firstname.lastname@lip6.fr}\\
  \and
  LIB, Université de Bourgogne Franche-Comté\\
  B.P. 47 870, 21078 Dijon Cedex France\\
  \email{sergey.kirgizov@u-bourgogne.fr}
}

\pagenumbering{gobble}
\maketitle

\begin{abstract} 
  Automatic detection of relevant groups of nodes in large real-world
  graphs, i.e. community detection, has applications in many fields and
  has received a lot of attention in the last twenty years. The most
  popular method designed to find overlapping communities (where a node
  can belong to several communities) is perhaps the clique percolation
  method (\cpm). This method formalizes the notion of community as a
  maximal union of $k$-cliques that can be reached from each other
  through a series of adjacent $k$-cliques, where two cliques are
  adjacent if and only if they overlap on $k-1$ nodes.  Despite much effort \cpm has
  not been scalable to large graphs for medium values of $k$.

  Recent work has shown that it is possible to efficiently list all
  $k$-cliques in very large real-world graphs for medium values of
  $k$. We build on top of this work and scale up \cpm. In cases where this
  first algorithm faces memory limitations, we propose
  another algorithm, \cpmz, that provides a solution close to the exact one,
  using more time but less memory.

  \keywords{Graphs, Graph mining, Social networks, Community detection, $k$-clique percolation}

\end{abstract} 

\section{Introduction}
\label{sec:intro}

The problem of detecting communities in real networks has received a
lot of attention in recent years. Many definitions of communities have
been proposed, corresponding to different requirements on the type of
communities that are to be detected and/or on some properties of the
studied graph: some definitions depend on global or local properties
of the graph, nodes can belong to several communities
or to a single one, links may have weights, communities may have a
hierarchical structure, {\em etc}.
In practice, algorithms also have to be designed to extract the
communities from large graphs.  Many definitions of communities proposed with
their corresponding algorithm already exist~\cite{Fortunato2012}. Most
real networks are characterized by communities that may overlap,
i.e. a node may belong to several communities.  
In the context of social networks for instance, each person belongs to
several communities such as colleagues, family, leisure activities,
etc.

One of the most popular methods designed to find overlapping
communities is the clique percolation method (\cpm) which
produces {\em $k$-clique communities}~\cite{palla2005uncovering}.

\begin{definition}[$k$-clique]
  A $k$-clique $c_k$ is a fully connected set of $k$ nodes,
  i.e. every pair of its vertices is connected by a link in the graph.
\end{definition}

For example, in Figure~\ref{gr1}, the set $\{1,3,4,6 \}$ is a $4$-clique.

\begin{definition}[$k$-cliques adjacency]
  Two $k$-cliques are said to be adjacent if and
  only if they share $k-1$ nodes.
\end{definition}

For example, in Figure~\ref{gr1}, the two $4$-cliques $\{1,3,4,6 \}$ and  \(\{1,3,6,9 \} \) are adjacent.

\begin{definition}[\cpm community]
  A $k$-clique community (or \cpm community) is the set of the
  vertices belonging to a maximal set of $k$-cliques that can be
  reached from each other through a series of adjacent $k$-cliques.
\end{definition}

Though the corresponance between the obtained communities and real-world ones are hard
to charactarize in a general manner,
the advantages of this definition are well
known~\cite{gregori2013parallel}: it is formally well defined, totally
deterministic, does not use any heuristics or function optimizations
that are hard to interpret, allows communities to overlap, and each
community is defined  locally\footnote{Notice that if a node does not belong
  to at least one $k$-clique, it doesn't belong to any community.}.

Despite much effort \cpm has not been scalable to large graphs for medium values of $k$,
i.e, values between 5 and 10.
We therefore seek in this work to extend the computation of \cpm to larger graphs.
A bottleneck for most previous contributions is the memory required.
Indeed, exact methods need to store in memory either all $k$-cliques, or all maximal
cliques (cliques which are not included in any other clique), which is prohibitive in many cases.

Our contribution is twofold:
\begin{enumerate}[topsep=0ex,partopsep=0ex]
\item we improve on the state of the art concerning the computation of \cpm{} communities,
  by leveraging an existing algorithm able to list $k$-cliques in a very efficient manner;
\item in cases where this first algorithm faces memory limitations,
  we propose another algorithm that provides a solution close to the exact one,
  using more time but less memory.
\end{enumerate}

We will show that these algorithms allow to compute exact solutions in cases where
this was not possible before, and to compute a close result of good quality in cases
where the graph is so large that our exact method does not work due to memory limitations.

The rest of the paper is organized as follows. In Section
\ref{sec:related}, we present the related work. In Section
\ref{sec:algorithm}, we present our exact and relaxed algorithms. We discuss
time and memory requirements in Section \ref{sec:analysis}. We then
evaluate the performance of our exact algorithm against the state of the art
in Section~\ref{sec:experiments}, and we compare the results and performances of our
exact and relaxed algorithms.
We conclude in Section~\ref{conc}, and present some perspectives for future work.


\section{Related Work}
\label{sec:related}

There are many algorithms for computing overlapping communities as
shown in a dedicated survey~\cite{xie2013overlapping}.  The focus of
our paper is on the computation of the $k$-clique communities.
Existing algorithms to compute $k$-clique communities in a graph can
be split in two categories:
\begin{enumerate}[topsep=0ex,partopsep=0ex]
\item[(1)]
  algorithms that compute all maximal cliques of size $k$ or
  more and use them to compute all $k$-clique communities.
  Indeed, two maximal cliques that overlap on $k-1$ nodes or
  more belong to the same $k$-clique community. Most state-of-the-art
  approaches~\cite{gregori2013parallel,palla2005uncovering,reid2012percolation}
  belong to this category;
\item[(2)] Kumpula {\em et al.}~\cite{kumpula2008sequential}
  compute all $k$-cliques and then compute $k$-clique communities from them strictly
  following the definitions of a $k$-clique community, i.e. detect which $k$-cliques
  are adjacent.
\end{enumerate}
Algorithms of the first category differ in the method used to find which maximal cliques
are adjacent. However, the first step which consists in computing all maximal cliques is
always the same and is done sequentially.
While this problem is NP-hard, there exist algorithms scalable to relatively large sparse real-world graphs,
based on the Bron-Kerbosch algorithm~\cite{bron1973algorithm,eppstein2010listing,eppstein2011listing}. 

Any large clique, with more than $k$ vertices, will be included in
a single $k$-clique community, and there is no need to list all
$k$-cliques of this large clique.  This is the main reason why there
are more methods following the approach of listing maximal cliques,
category~(1), rather than listing $k$-cliques, category~(2).
However, it has been found that most real-world graphs actually do
not contain very large cliques and that listing $k$-cliques for small and medium
values of $k$ is a scalable problem in practice~\cite{DanischBS18,Li2020Ordering},
in many cases it is more tractable
that listing all maximal cliques. This makes algorithms in
the category~(2) more interesting for practical scenarios.

The algorithm of~\cite{kumpula2008sequential} proposes a method to
list all $k$-cliques then merges the found $k$-cliques into $k$-clique
communities using a {\em Union-Find}~\cite{gal}, a very efficient data
structure~\cite{fre,tar} which we describe briefly in Section~\ref{sec:uf}.
In the context of~\cite{kumpula2008sequential} the
Union-Find contains all $(k-1)$-cliques (as elements) and each
$k$-clique $c_k$ triggers the union of the subsets that contain
at least one $(k-1)$-clique of $c_k$.

Our first contribution builds on the same idea. We first propose to
use an efficient algorithm for listing $k$-cliques~\cite{DanischBS18},
which improves the overall performance.  Going further, in order to
provide an approximation of the community structure for graphs for
which it is not possible to obtain the exact result due to memory
limits, we propose to perform union of sets of $z$-cliques, $2 \le z < k-1$,
instead of $(k-1)$-cliques. This construction is discussed in details
in the next section.


\section{Algorithm}
\label{sec:algorithm}

A graph $G=(V,E)$ consists of its vertex set $V$ and its edge set $E$.
In the following $c_k$ will always denote a
$k$-clique, from the context it will be clear which one exactly.

\subsection{Union-Find structure}
\label{sec:uf}

The algorithms we will present rely on 
the Union-Find structure, also known in the literature as a {\em
  disjoint-set data structure}.
It stores a collection of disjoint sets,
allowing very efficient union operations between them. The structure
is a forest, whose trees correspond to disjoint subsets, and nodes
correspond to the elements.
The operations on the nodes are the following:
\begin{itemize}[topsep=0ex,partopsep=0ex]
\item $\find(p)$: returns the root of the tree containing a Union-Find node $p$.
\item $\union(r_1, ..., r_l)$:  performs the union of trees
  represented by their roots \texttt{$r_i$} by making one root the parent of all others;
\item $\makeset()$: creates a new tree with one node $p$, corresponding to a new empty set,
  and returns $p$.
\end{itemize}

\subsection{Exact \cpm algorithm}

First we build on the idea introduced in~\cite{kumpula2008sequential}.
A \cpm{} community is represented as the set of all the $(k-1)$-cliques it contains.
These communities are represented by a Union-Find structure whose nodes are $(k-1)$-cliques.
The algorithm then iterates over all $k$-cliques and tests
if the current $k$-clique belongs to a community, by testing whether it has a $(k-1)$-clique
in common with it.

\begin{algorithm}[!htbp]
  \caption{Exact \cpm algorithm}
  \label{algo:cpm}
  \begin{algorithmic}[1]
    \State ${\tt UF} \gets$ Union-Find data structure
    \State $\dict \gets$ Empty Dictionary
    \For {each $k$-clique $c_k \in G$}
    \State $S \gets \emptyset$
    \Comment communities of  $c_k$ to merge
    \For {each $(k-1)$-clique $c_{k-1} \subset c_k$}
    \If {$c_{k-1} \in \dict.\texttt{keys}()$}
    \State $p \gets \uf.\find(\dict[c_{k-1}])$ \label{cpm:Find}
    \Else
    
    \State $p \gets {\tt UF.MakeSet}()$
    \State $\dict[c_{k-1}] \gets p$
    \EndIf
    \State $S \gets S \cup \{p\}$
    \EndFor
    \State $\uf.\union(S)$
    \EndFor
  \end{algorithmic}
\end{algorithm}

Algorithm~\ref{algo:cpm} considers all $k$-cliques one by one. For every $k$-clique
it iterates over its $(k-1)$-cliques $c_{k-1}^1, c_{k-1}^2, \ldots c_{k-1}^k$.
For every $c_{k-1}^i, i \in [1,k]$, it identifies the set $p_i$ to which it belongs in the
Union-Find. Then, it performs the union of all sets $p_i$.
Several algorithms exist for efficiently listing $k$-cliques~\cite{Li2020Ordering}.
We substitute one the best~\cite{DanischBS18} to the one proposed by the authors
of~\cite{kumpula2008sequential}.

As the number of $(k-1)$-cliques can be very large, this approach is problematic as
in some cases it is not possible to store them all in memory.
This leads us to a new algorithm which requires less memory but in rare cases incorrectly
merges some \cpm communities together.

\subsection{Memory efficient \cpm approximation}

For relatively small values of $k$, there are far fewer $z$-cliques than $(k-1)$-cliques
in real-world graphs. To get an intuition for this, consider the case of a large clique
of size $c$. It contains $c \choose z$ $z$-cliques and this number increases with $z$
for $z<c/2$. Therefore storing all $z$-cliques is feasible in cases where it is not possible
to store all the $(k-1)$-cliques.
We use this idea to propose an algorithm computing relaxed communities.

\begin{definition}[\cpmz community]
  An agglomerated $k$-clique community (or \cpmz community) is the
  union of one or more \cpm{} communities.
\end{definition}

Our memory efficient method, called {\em  \cpmz algorithm}, given a graph $G$, the size
of $k$-cliques and an integer $z \in [2, k-1)$, returns a set of agglomerated
$k$-clique communities, such that each \cpm{} community is included
in one and only one \cpmz community (see Theorem~\ref{thm}).

In the \cpm algorithm, a community is represented as a set of $(k-1)$-cliques, and
communities correspond to {\em disjoint} sets of $(k-1)$-cliques.
In the following, a community is represented as a set of $z$-cliques, and \cpmz
communities are represented as \emph{non-disjoint} sets of $z$-cliques.

The main idea of our \cpmz algorithm is to identify each $(k-1)$-clique to the set of
its containing $z$-cliques. The algorithm is very similar to \cpm.
For each $k$-clique, all $(k-1)$-cliques are considered.
Since we consider that a $(k-1)$-clique is represented by the set of its $z$-cliques,
the community of a $(k-1)$-clique is the one that contains all its $z$-cliques.

The \cpmz algorithm uses two principal data structures. \texttt{UF} is an Union-Find data
structure, whose nodes are identifiers of $z$-clique sets.
We will call these {\em Union-Find nodes}. The operations defined on this structure are
presented in Section~\ref{sec:uf}. Each $z$-clique can belong to several Union-Find nodes.
This is recorded in the  \texttt{Setz} dictionary, which associates to each $z$-clique the
set of Union-Find nodes to which it belongs.
See Figure~\ref{fig:ex-ouf} for an example.

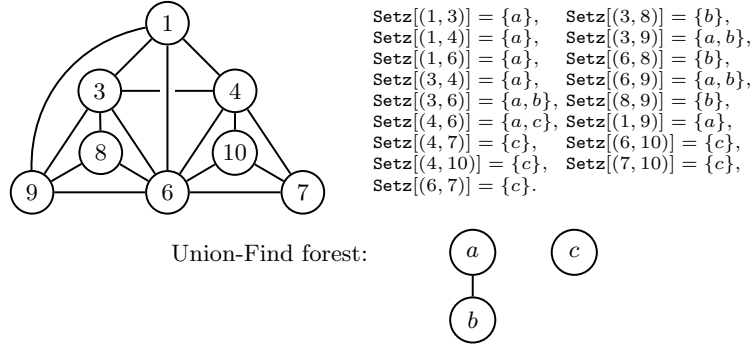
\begin{figure}
  \centering

  \tikzset{every picture/.style={line width=0.75pt}} 

  \begin{tikzpicture}[scale=0.9]
    \tikzstyle{nod1} = [circle, draw, fill=black, scale=1.0,
    text = white];
    \tikzstyle{nod2} = [circle, draw, fill=white, scale=1.0];
    \tikzstyle{gd} = [black,-];

    \useasboundingbox (-2.5,0.5) rectangle (2.5,-3);
    
    \node[nod2] (1) at  (0,0){1};
    \node[nod2] (3) at  (-1,-1){3};
    \node[nod2] (4) at  (1,-1){4};
    \node[nod2] (a6) at  (0,-2.5){6};
    \node[nod2] (a7) at  (2,-2.5){7};
    \node[nod2, inner sep=2.1pt] (10) at (1.38,-1.53,){10};
    \node[nod2] (a8) at (-0.6,-1.53,){8};
    \node[nod2] (a9) at (-2,-2.5){9};
    
    \path[gd] (1) edge  (3);
    \path[gd] (1) edge  (4);
    
    \path[gd] (3) edge  (4);
    \path[gd] (3) edge  (a6);

    \path[gd, line width=2mm, white] (1) edge  (a6);    
    \path[gd] (1) edge  (a6);

    \path[gd] (4) edge  (a6);
    \path[gd] (4) edge  (a7);
    
    \path[gd] (a6) edge  (a7);

    \path[gd] (1) edge
    [bend right=40] (a9);         
    
    \path[gd] (3) edge  (a9);
    \path[gd] (a6) edge  (a9);
    
    \path[gd] (4) edge  (10);
    \path[gd] (a6) edge  (10);
    \path[gd] (a7) edge  (10);

    \path[gd] (a8) edge  (a6);
    \path[gd] (a8) edge  (a9);
    
    \path[gd] (a8) edge  (3);

  \end{tikzpicture}\quad
  \raisebox{1.2\height}{
    \scriptsize
    $\begin{array}{ll}
       \setz[(1,3)] = \{a\},   &  \setz[(3,8)] = \{b\},  \\
       \setz[(1,4)] = \{a\},   &  \setz[(3,9)] = \{a,b\},  \\
       \setz[(1,6)] = \{a\},   &  \setz[(6,8)] = \{b\},  \\
       \setz[(3,4)] = \{a\},   &  \setz[(6,9)] = \{a,b\},  \\
       \setz[(3,6)] = \{a,b\}, &  \setz[(8,9)] = \{b\},  \\
       \setz[(4,6)] = \{a,c\}, &  \setz[(1,9)] = \{a\},  \\
       \setz[(4,7)] = \{c\},   &  \setz[(6,10)] = \{c\},  \\
       \setz[(4,10)] = \{c\},  &  \setz[(7,10)] = \{c\},  \\
       \setz[(6,7)] = \{c\}.   &  \\
     \end{array}$
   }

   \begin{tikzpicture}[scale=0.9]
     \tikzstyle{nod1} = [circle, draw, fill=black, scale=1.0,
     text = white];
     \tikzstyle{nod2} = [circle, draw, fill=white, minimum width=6mm];
     \tikzstyle{gd} = [black,-];

     \node (a) at  (-3,0){Union-Find forest:};
     \node[nod2] (a) at  (0,0){$a$};
     \node[nod2] (b) at  (0,-1){$b$};
     \node[nod2] (c) at  (1.5,0){$c$};

     \path[gd] (a) edge  (b);
   \end{tikzpicture}
   \caption{ Example of a graph and the corresponding data structures of
     the \cpmz{} algorithm.  In this example, we have $k=4, z = 2$.
     There are 17 $z$-cliques belonging to 4 $k$-cliques, namely
     \{1,3,4,6\}, \{3,6,8,9\}, \{4,7,6,10\} and \{1,3,6,9\}.  The nodes
     of the Union-Find structure are represented using letters $a,b$
     and $c$.  Each $z$-clique is associated to one or more Union-Find
     nodes, as shown in the \setz information on the top-right.  The
     Union-Find structure represents two sets because there are two
     different root nodes: $a$ and $c$.  The first set contains all
     2-cliques associated with $a$ or $b$, the second contains
     2-cliques associated with $c$.}
   \label{fig:ex-ouf}
 \end{figure}

More formally, \texttt{Setz} is a dictionary with $z$-cliques as keys.
For a $z$-clique $c_z$:
\begin{itemize}
\item \texttt{Setz[$c_z$]} is a set of Union-Find nodes;
\item \texttt{Setz[$c_z$].add($q$)} adds the Union-Find node $q$
  to the set of Union-Find nodes of $c_z$.
  It can also be seen as the action of adding $c_z$
  to the set identified by $q$.
\end{itemize}

At the end of the algorithm, every tree corresponds to a \cpmz community
represented as a union of sets containing $z$-cliques. 

Note that during the execution of the algorithm, the same $z$-clique $c_z$ can belong to several
Union-Find nodes of the same Union-Find set, which creates redundancies in \setz[$c_z$].
This is the case for instance in the example of Figure~\ref{fig:ex-ouf} in which \texttt{Setz[(3,6)]} contains
both $a$ and $b$ which belong to the same Union-Find set.
This situation can occur if a $z$-clique belongs to two different Union-Find sets which are merged later.

In our \cpmz{} algorithm (presented below) we eliminate these redundancies when
we detect them (see Line~\ref{cpmz:reduce-setz}).

\begin{algorithm}[!htbp]
  \begin{algorithmic}[1]
    \State  ${\tt UF} \gets$ Empty Union-Find data structure
    \State ${\tt Setz} \gets$ Empty Dictionary
    \For {each $k$-clique $c_k \in G$ } \label{cpmz:for:kcliques}
    \State $S\gets \emptyset$ \label{cpmz:initS}
    \Comment{Sets of $z$-cliques to merge}
    \For {each $(k-1)$-clique $c_{k-1} \subset c_k$ \label{cpmz:for:k-1cliques}}
    \State $P \gets \emptyset$
    \For {each $z$-clique $c_z \subset c_{k-1}$}
    \State $\setz[c_z] \gets \{{\tt UF.Find}(p) \ for \ p \in {\tt Setz}[c_z]\}$
    \label{cpmz:reduce-setz}
    \If {$P == \emptyset$}
    \State $P \gets \setz[c_z]$
    \Else
    \State $P \gets P \cap \setz[c_z]$
    \label{cpmz:inter}
    \EndIf
    \EndFor
    \State $S \gets S\cup P$
    \label{cpmz:for:k-1cliques:end}
    \EndFor
    \State $q \gets NULL$
    \Comment{Identifier of the resulting set of $z$-cliques}
    \If {$S == \emptyset$}
    \State  $q \gets {\tt UF.MakeSet()}$  \label{cpmz:makeset}
    \Else
    \State  $q\gets {\tt UF.Union}(S)$  \label{cpmz:unionS}
    \EndIf
    \For {each $z$-clique $c_z \subset c_{k}$}
    \State \texttt{Setz[$c_z$].add($q$)}  \label{cpmz:add}
    \EndFor
    \EndFor
  \end{algorithmic}
  \caption{\cpmz pseudocode}
  \label{algo:cpmz}
\end{algorithm}

Algorithm~\ref{algo:cpmz} is the pseudo-code of \cpmz. The for loop on
Line~\ref{cpmz:for:kcliques} iterates over each $k$-clique $c_k$ of a graph $G$.
As in the \cpm{} algorithm, the idea is to identify the communities of
each $(k-1)$-clique of $c_k$ and perform their union.
As explained above, the communities of a $(k-1)$-clique are the ones that contain all
its $z$-cliques, which is why we compute their intersection in the set $P$ in
Line~\ref{cpmz:inter}.
The set $P$ is computed for all $(k-1)$-cliques in Lines~\ref{cpmz:initS}-\ref{cpmz:for:k-1cliques:end}
and their union is computed in set $S$.
Then all the sets in $S$ are
merged in Line~\ref{cpmz:unionS}.

It may turn out that
$S$ is empty after the loop of Line~\ref{cpmz:for:k-1cliques}.  This
corresponds to the case where none of the $(k-1)$-cliques of $c_k$
were observed before:
if a $(k-1)$-clique $c_{k-1}$ has not yet been seen in the algorithm,
its $z$-cliques may not belong to a common Union-Find set,
and therefore $P$ computed at Line~\ref{cpmz:inter} is empty.
If this happens for all $(k-1)$-cliques  of $c_k$ S is empty
and a new set (Union-Find node) is created on Line~\ref{cpmz:makeset}.
The identifier of the resulting (new or merged) set is added to the set
of Union-Find nodes for every $z$-clique of the current $k$-clique
(Line~\ref{cpmz:add}).

In some rare cases, Line~\ref{cpmz:inter} will consider that a $(k-1)$-clique
belongs to a Union-Find set while this is not true in the \cpm exact case.
Figure~\ref{gr1} gives an example. In that case this causes an incorrect $k$-clique
adjacency detection and results in an incorrect merge of two or more $k$-clique communities.

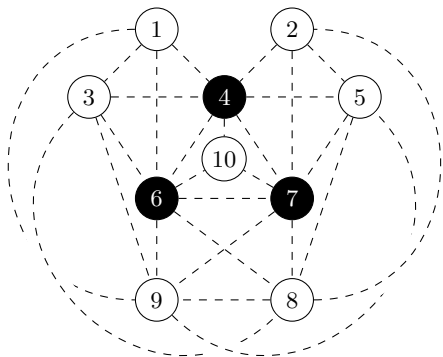
\begin{figure}
  \centering

  \begin{tikzpicture}[scale=0.9, rotate=0]
    \tikzstyle{nod1} = [circle, draw, fill=black, scale=1.0,
    text = white];
    \tikzstyle{nod2} = [circle, draw, fill=white, scale=1.0];
    \tikzstyle{gd} = [black,dashed,-];

    \useasboundingbox (-2.5,0.5) rectangle (4.5,-4.8); 

    \node[nod2] (1) at  (0,0){1};
    \node[nod2] (2) at  (2,0){2};
    
    \node[nod2] (3) at  (-1,-1){3};
    \node[nod1] (4) at  (1,-1){4};
    \node[nod2] (a5) at  (3,-1){5};
    \node[nod1] (a6) at  (0,-2.5){6};
    \node[nod1] (a7) at  (2,-2.5){7};
    \node[nod2] (a8) at (2,-4){8};
    \node[nod2] (a9) at (0,-4){9};
    \node[nod2, inner sep=2.1pt] (10) at (1.38,-1.53,){10};
    
    \path[gd] (1) edge  (3);
    \path[gd] (1) edge  (4);
    \path[gd] (1) edge  (a6);
    
    \path[gd] (2) edge  (a5);
    \path[gd] (2) edge  (4);
    \path[gd] (2) edge  (a7);
    \path[gd] (3) edge  (4);
    \path[gd] (3) edge  (a6);

    \path[gd] (4) edge  (a6);
    \path[gd] (4) edge  (a7);
    \path[gd] (4) edge  (a5);
    
    \path[gd] (a5) edge  (a7);

    \path[gd] (a6) edge  (a7);
    \path[gd] (a6) edge  (a9);
    \path[gd] (a6) edge  (a8);
    
    \path[gd] (a7) edge  (a9);
    \path[gd] (a7) edge  (a8);

    \path[gd] (3) edge (a9);
    \path[gd] (a5) edge (a8);

    \path[gd] (a8) edge  (a9);
    \path[gd] (4) edge  (10);
    \path[gd] (a6) edge  (10);
    \path[gd] (a7) edge  (10);

    \path[gd] (1) edge
    [bend right=90, looseness=1.6] (a9);
    
    \path[gd] (a9) edge 
    [bend right=90, looseness=1.6] (a5);

    \path[white,-] (3) edge 
    [bend right=90, looseness=1.6, line width=3mm
    ] (a8);
    \path[gd] (3) edge 
    [bend right=90, looseness=1.6] (a8);

    \path[white,-] (a9) edge 
    [bend right=90, looseness=1.6, line width=3mm
    ] (a5);
    \path[gd] (a9) edge 
    [bend right=90, looseness=1.6] (a5);

    \path[white,-] (a8) edge
    [bend right=90, looseness=1.6, line width=3mm
    ] (2);
    \path[gd] (a8) edge 
    [bend right=90, looseness=1.6] (2);

    \end{tikzpicture}
  \caption{ In this example $k=4$ and $z=2$. A $k$-clique percolation
    community is formed by the nodes of the following $k$-cliques:
    $\{1,3,4,6\}$, $\{1,3,6,9\}$, $\{3,6,8,9\}$, $\{6,7,8,9\}$,
    $\{5,7,8,9\}$, $\{2,5,7,8\}$, $\{2,4,5,7\}$.  The middle
    $(k-1)$-clique with nodes $\{4,6,7\}$ is formed by the $z$-cliques
    (edges) of other $k$-cliques, whereas it is not itself a part of any
    $k$-clique given above. When a new $k$-clique $\{4,6,7,10\}$ is
    observed, the \cpmz algorithm will produce one community
    $\{1,2,3,4,5,6,7,8,9,10\}$, but the exact \cpm algorithm gives two
    communities, namely $\{1,2,3,4,5,6,7,8,9\}$ and $\{4,6,7,10\}$.
  }
  \label{gr1}
\end{figure}

\begin{theorem}[\cpmz validity]
  The \cpmz algorithm returns a set of agglomerated $k$-cliques
  communities, such that each \cpm community is included in one and
  only one agglomerated community.
  \label{thm}
\end{theorem}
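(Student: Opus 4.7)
The plan is to associate each $k$-clique $c_k$ processed by Algorithm~\ref{algo:cpmz} with the identifier $q$ produced for it on Line~\ref{cpmz:makeset} or Line~\ref{cpmz:unionS}, and then with the Union-Find tree containing $\uf.\find(q)$ at termination. This induces a map from $k$-cliques to the final trees of \texttt{UF}, and I would prove the theorem by establishing (i) each tree corresponds to a well-defined \cpmz community and each $k$-clique maps to exactly one of them, and (ii) the map is constant on each \cpm community.

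Part (i) follows directly from the basic invariants of the Union-Find data structure: trees are disjoint and each identifier has a unique root. Therefore the $k$-cliques mapped to a given tree have a well-defined associated set of vertices (the union of their vertex sets), which one can identify with a \cpmz community, and any \cpm community $C$ whose constituent $k$-cliques all map to the same tree is included in that unique \cpmz community.

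For part (ii), since a \cpm community is the transitive closure of the adjacency relation among $k$-cliques, it suffices to show that any two adjacent $k$-cliques $c_k, c_k'$ end up in the same tree. Assume without loss of generality that $c_k$ is processed before $c_k'$ and let $c_{k-1} = c_k \cap c_k'$ denote their shared $(k-1)$-clique. Line~\ref{cpmz:add} inserts the identifier $q$ of $c_k$ into $\setz[c_z]$ for \emph{every} $z$-clique $c_z \subset c_k$, and in particular for every $c_z \subset c_{k-1}$. When $c_k'$ is later processed, Line~\ref{cpmz:reduce-setz} replaces every element of $\setz[c_z]$ by its current root, so $r := \uf.\find(q)$ belongs to $\setz[c_z]$ for all $c_z \subset c_{k-1}$. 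Hence $r$ survives the intersection $P$ on Line~\ref{cpmz:inter}, lies in $S$ at Line~\ref{cpmz:for:k-1cliques:end}, and is merged into the tree of $c_k'$'s identifier by $\uf.\union(S)$ on Line~\ref{cpmz:unionS}. Since Union-Find trees never split, $c_k$ and $c_k'$ share a root at termination. Iterating along any adjacency path shows that a single tree captures the whole \cpm community.

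The main subtlety I expect is tracking how the identifier $q$ evolves across unions happening between the processing of $c_k$ and that of $c_k'$: $q$ itself may cease to be a root, but the reduction step on Line~\ref{cpmz:reduce-setz} is what guarantees that the entries of $\setz[c_z]$ seen by $c_k'$ are current roots, which in turn is what keeps $r$ inside the intersection $P$. It is also worth noting that the argument crucially uses Line~\ref{cpmz:add} being applied to \emph{all} $z$-cliques of $c_k$; and that the converse direction—$r \in P$ implying $c_{k-1}$ truly arose from a previously processed $k$-clique—does not hold in general, which is exactly what causes the over-merging illustrated by Figure~\ref{gr1} but does not affect the containment statement we must prove.
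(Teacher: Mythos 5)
Your proposal is correct and follows essentially the same route as the paper's proof: show that for two adjacent $k$-cliques, after the first is processed all $z$-cliques of the shared $(k-1)$-clique hold a common Union-Find identifier, whose current root then survives the intersection $P$ and forces the union when the second is processed, so adjacency (and hence its transitive closure) never crosses tree boundaries. Your write-up is merely more explicit about the mechanism — in particular about Line~\ref{cpmz:reduce-setz} keeping the stored identifiers equal to current roots between the two iterations — which the paper's terser proof leaves implicit.
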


\begin{proof}
  If two $k$-cliques $c_k^1$ and $c_k^2$ are adjacent, this means they share a $(k-1)$-clique $c_{k-1}$.
  This will be correctly detected by Lines~\ref{cpmz:for:k-1cliques}-\ref{cpmz:inter} of Algorithm~\ref{algo:cpmz}:
  after the iteration on $c_k^1$ in the main loop,
  all $z$-cliques of $c_{k-1}$ will belong to a common Union-Find set.
  The root of this set will belong to $S$ during the iteration on $c_k^2$,
  ensuring that the Union-Find sets of both $k$-cliques will be merged.
  In other words, \cpm communities are never   split by \cpmz Algorithm and
  each \cpm{} community belongs to a single agglomerated community.
  Conversely, an agglomerated community may contain more than one \cpm{} community.
\end{proof}


\section{Analysis}
\label{sec:analysis}

We denote the number of $k$-cliques in graph $G$ by $n_k$.  For each
$k$-clique, the \cpm{} algorithm performs a \find and a \union operation
for each of its $(k-1)$-clique (see Algorithm~\ref{algo:cpm}). It is well
known (see for example~\cite{tar}) that Union-Find data structure performs
operations in $O(\alpha(n))$ amortized time, where $\alpha$ is the inverse
Ackermann function, and $n$ is the number of elements.  It grows extremely slowly.
Therefore, the number of operations is proportional in practice to $k\cdot n_k$.
The space complexity of this algorithm is dominated by the tree on the
$(k-1)$-cliques and the corresponding cost is
proportional to $(k-1)\cdot n_{k-1}$.

\medskip
\cpmz is a tradeoff between memory and time.
Indeed, we will see that the \cpmz{} has a higher running time than the exact
\cpm{} algorithm, but requires less memory.

For the \cpmz{} algorithm as well, the time required by the \union and
\find operations can still be considered as constant, as is the time
required for the \makeset and \add operations. The total number of operations
is then dominated by the number of \find operations of Line~\ref{cpmz:reduce-setz}.
This line runs on each distinct triplet $(c_k,c_{k-1},c_z)$, where
$c_{k-1} \subset c_k$ and $c_z \subset c_{k-1}$, and there are
$k-1 \choose z$ $\cdot k \cdot n_k $ such triplets. Each $z$-clique $c_z$ of Line~\ref{cpmz:reduce-setz} belongs to a number of Union-Find nodes
$|\setz[c_z]|$ that depends on the considered clique but also varies during
the execution of the algorithm:  it can either increase as new Union-Find node
are added to $\setz[c_z]$ (on Line~\ref{cpmz:add}) or decrease
(on Line~\ref{cpmz:reduce-setz}) if some Union-Find trees are merged
(on Line~\ref{cpmz:unionS}).
This number $|\setz[c_z]|$ is bounded by the number of $k$-cliques each
$z$-clique belongs to, which can theoretically be quite high.  However, we computed
in practice the average number of \find operations performed for each $z$-clique, and
we will see in Section~\ref{sec:cpmz-precision} that it is very often equal to 1 or 2
and never exceeds 6 in our experiments.
The main difference in the running time with respect to the exact \cpm{} algorithm is,
therefore, the extra $k-1 \choose z$ factor.

Concerning the space requirements, the \cpmz{} algorithm needs to
store all $z$-cliques, which takes a space proportional to $z\cdot
n_z$.  Each $z$-clique $c_z$ then belongs to a number of  Union-Find
nodes $|\setz[c_z]|$ that varies during the algorithm execution.
Even if the average of this number is low, we are interested now in
the maximum space taken at any time of the execution.  Finally, the
number of nodes in the Union-Find structure is equal to the number
of \makeset operations that have been performed during the execution.
In theory this number can also be quite high.  However, we will see in
Section~\ref{sec:experiments} that in practice the memory requirements
of the \cpmz{} algorithm are much lower than those of the \cpm{}
algorithm.

Finally, notice that both our algorithms result in the Union-Find
structure whose nodes represent
sets of cliques.
This structure encodes the
communities.  In order to obtain the actual node list of each
community, post-processing is needed.
  It consists in iterating over all cliques in the Union-Find structure.
  Then for each clique one must find its root node in the Union-Find and add
  the clique nodes to the corresponding set.
We do not take into account this post-processing in the  reported
running time and memory usage in the next section.


\section{Experimental evaluation}
\label{sec:experiments}

\paragraph{Machine}

We carried out our experiments on a Linux machine DELL PowerEdge R440, equipped with 2 processors
Intel Xeon Silver 4216 with 32 cores each, and with 384Gb of RAM. 

\paragraph{Datasets}

We consider several real-world graphs that we obtained from~\cite{snapnets}.
Their characteristics are presented in Table~\ref{tab:graphs}.
We distinguish between three categories of graphs according to their number of $k$-cliques.
For graphs with small core values all algorithms are able to run;
for graphs with medium core values, the state-of-the-art algorithms take too long to complete
(with the exception of DBLP discussed below) while our \cpm{} algorithm
obtains results for small values of $k$;
for graphs with large core values even our \cpm{} algorithm runs out of memory or time except
for very small values of $k$, but we will show that we are able to obtain relaxed results of
high quality with our \cpmz{} algorithm.

\begin{table}
  \caption{Our dataset of real-world graphs, ordered by core value $c$.
    $k_{min}$ and $k_{max}$ represent the minimum and maximum $k$ on which we could run our \cpm
    algorithm. $n_k$ is the number of $k$-cliques of the graph.}
  \label{tab:graphs}
  \centering
  \begin{tabular}{|c|c|c|c|c|c|c|}
    \hline
    network
    & $n$
    & $m$
    & $c$
    & $k_{min}-k_{max}$
    & $n_{k_{min}}$
    & $n_{k_{max}}$
    \\ \hline \hline
    \begin{tabular}[c]{@{}c@{}}soc-pokec\\ loc-gowalla\\ Youtube\\ zhishi-baidu\end{tabular}
    & \begin{tabular}[c]{@{}c@{}}\np{1632803}\\ \np{196591}\\ \np{1134890}\\ \np{2140198}\end{tabular}
    & \begin{tabular}[c]{@{}c@{}}\np{22031964}\\ \np{950327}\\ \np{2987624}\\ \np{17014946}\end{tabular}
    & \begin{tabular}[c]{@{}c@{}}47\\ 51\\ 51\\ 78\end{tabular}
    & \begin{tabular}[c]{@{}c@{}}3 - 15\\ 3 - 15\\ 3 - 15\\ 3 - 15\end{tabular}
    & \begin{tabular}[c]{@{}c@{}}\np{32557458}\\ \np{2273138}\\ \np{3056386}\\ \np{25207196}\end{tabular}
    & \begin{tabular}[c]{@{}c@{}}\np{353958854}\\ \np{201454150}\\ \np{1068}\\ \np{1080702188}\end{tabular}
    \\ \hline \hline
    \begin{tabular}[c]{@{}c@{}}as-skitter\\ DBLP\\ WikiTalk\end{tabular}
    & \begin{tabular}[c]{@{}c@{}}\np{1696415}\\ \np{425957}\\ \np{2394385}\end{tabular}
    & \begin{tabular}[c]{@{}c@{}}\np{11095298}\\ \np{1049866}\\ \np{4659565}\end{tabular}
    & \begin{tabular}[c]{@{}c@{}}111\\ 113\\ 131\end{tabular}
    & \begin{tabular}[c]{@{}c@{}}3 - 6\\ 3 - 7\\ 3 - 7\end{tabular}
    & \begin{tabular}[c]{@{}c@{}}\np{28769868}\\ \np{2224385}\\ \np{9203519}\end{tabular}
    & \begin{tabular}[c]{@{}c@{}}\np{9759000981}\\ \np{60913718813}\\ \np{5490986046}\end{tabular}
    \\ \hline \hline
    \begin{tabular}[c]{@{}c@{}}Orkut\\ Friendster\\ LiveJournal\end{tabular}
    & \begin{tabular}[c]{@{}c@{}}\np{3072627}\\ \np{124836180}\\ \np{4036538}\end{tabular}
    & \begin{tabular}[c]{@{}c@{}}\np{117185083}\\ \np{1806067135}\\ \np{34681189}\end{tabular}
    & \begin{tabular}[c]{@{}c@{}}253\\ 304\\ 360\end{tabular}
    & \begin{tabular}[c]{@{}c@{}}3 - 5\\ 3 - 4\\ 3 - 4\end{tabular}
    & \begin{tabular}[c]{@{}c@{}}\np{627584181}\\ \np{4173724124}\\ \np{177820130}\end{tabular}
    & \begin{tabular}[c]{@{}c@{}}\np{15766607860}\\ \np{8963503236}\\ \np{5216918441}\end{tabular}
    \\ \hline
  \end{tabular}
\end{table}

\paragraph{Implementation}
We implemented our \cpm and \cpmz algorithm in C. The implementation is available on the following gitlab
repository: \url{https://gitlab.lip6.fr/baudin/cpm-cpmz}. 
For the competitors, we used the publicly
available implementations of their
algorithms~\cite{gregori2013parallel,kumpula2008sequential,palla2005uncovering,reid2012percolation}.

\paragraph{Computing domain}

For each graph and each algorithm, we performed the computations of \cpm for all values of $k$
from 3 to 15, unless we were not able to finish the computation for one of the following
reasons:
\begin{itemize}
\item the memory exceeded 390 Gb of RAM,
\item or the computation time exceeded 72 hours.
\end{itemize} 

We ran the
\cpmz algorithm
for $z=2$ and $z=3$. It could be computed on all the cases for which \cpm works, except for $z=3$ in graphs
zhishi-baidu with $k=15$ and DBLP with $k=7$.

The interesting point is that we manage to have results with \cpmz in cases where the computation
could not be carried out by the \cpm algorithm: for as-skitter $k = 7$, WikiTalk $k = 8,9$,
Orkut $k = 6$ and Friendster $k = 5,6$.

The detail of all the calculated values, with which the following figures were generated, is available on
the following gitlab repository: \url{https://gitlab.lip6.fr/baudin/cpm-supplementary-material}.

\subsection{Comparison with the state of the art}
\label{sec:cpm-stateofart}

The algorithm proposed by Palla {\em et al.} in the original paper introducing
\cpm~\cite{palla2005uncovering} is quadratic in the number of maximal cliques. Given that we are
interested in graphs with at least several million cliques, these graphs are too big to be
processed by this algorithm. We do not perform experiments with this algorithm.

In addition, our tests have shown that the algorithm by Reid {\em et al.}~\cite{reid2012percolation}
has a better performance than that of Gregori {\em et al.}~\cite{gregori2013parallel}
(sequential version) therefore we do not present the results obtained with the version of
Gregori {\em et al}.

We observed that there are indeed linearity factors:
\begin{itemize}
\item in time: for each k-clique, each of its $(k-1)$-clique is processed in constant time, hence the
  running time of \cpm{} is indeed linear in $k \cdot n_k$
\item in memory: the memory is used to store the Union-Find structure on the $(k-1)$-cliques:
  one node per $(k-1)$-clique encoded on $k-1$ integers, hence the memory needed by \cpm{} is linear
  in $(k-1) \cdot n_ {k-1}$.
\end{itemize}

Figure~\ref{fig:cpm-time-mem} compares the time and the memory necessary for the computation of the
\cpm{} communities by our \cpm{} algorithm and the remaining competitive algorithms in the state of
the art, proposed by
Reid {\em et al.}~\cite{reid2012percolation} and Kumpula {\em et al.}~\cite{kumpula2008sequential}.
For each competitive algorithm, we plot its running time (resp. memory usage) divided by the running
time (resp. memory usage) of our \cpm{} algorithm.
We display the results as a function of $n_k$,
where $n_k$ is the number of $k$-cliques
of the input graph. 
In some cases, our \cpm{} algorithm obtains results whereas one of the state-of-the-art doesn't.
This can happen because this algorithm exceeds either the time or memory limit.
We display this by placing a symbol on the corresponding horizontal line at the top of the figure.

\begin{figure}
  \centering
  \includegraphics[scale=0.37]{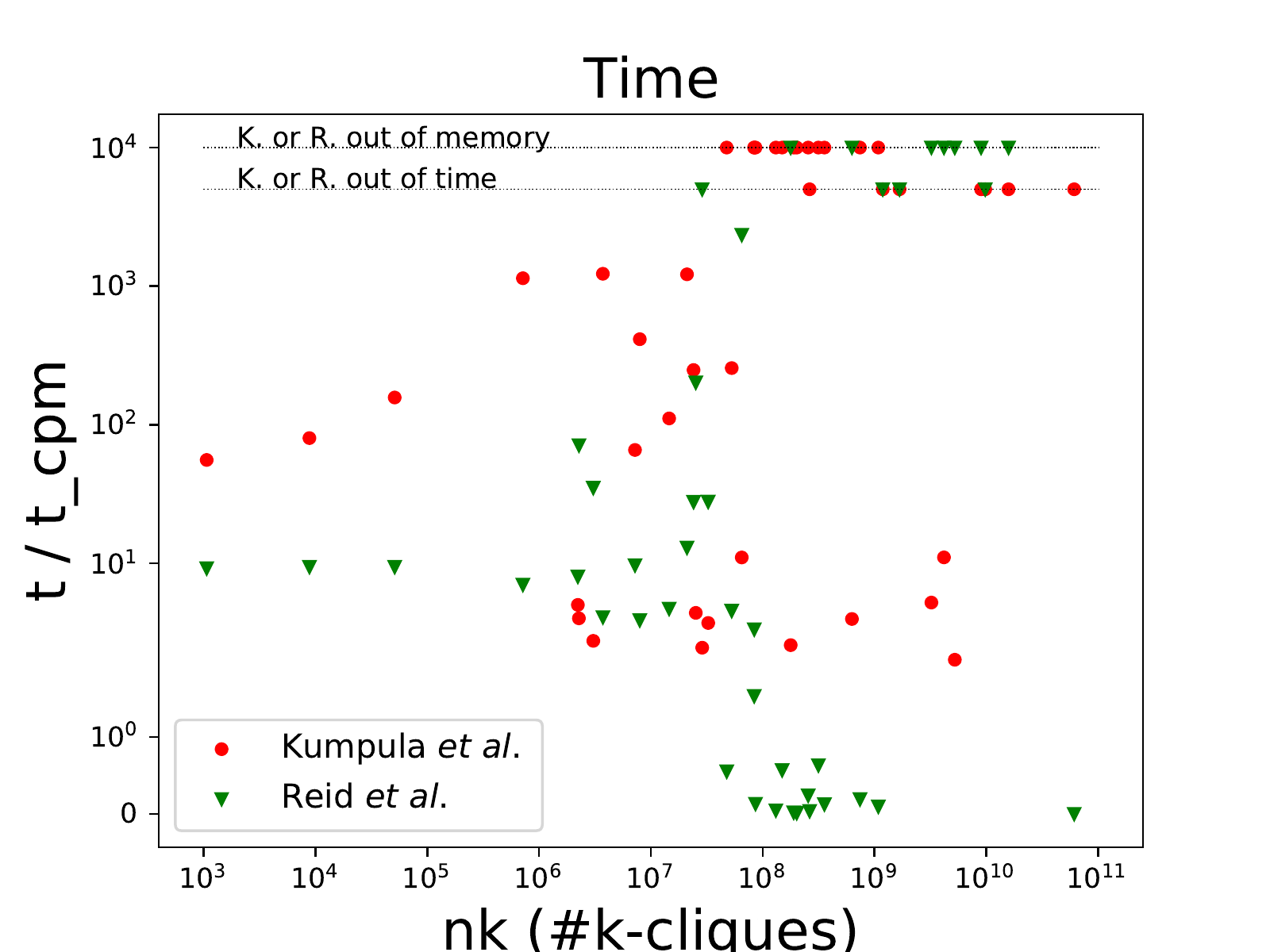}
  \hfill
  \includegraphics[scale=0.37]{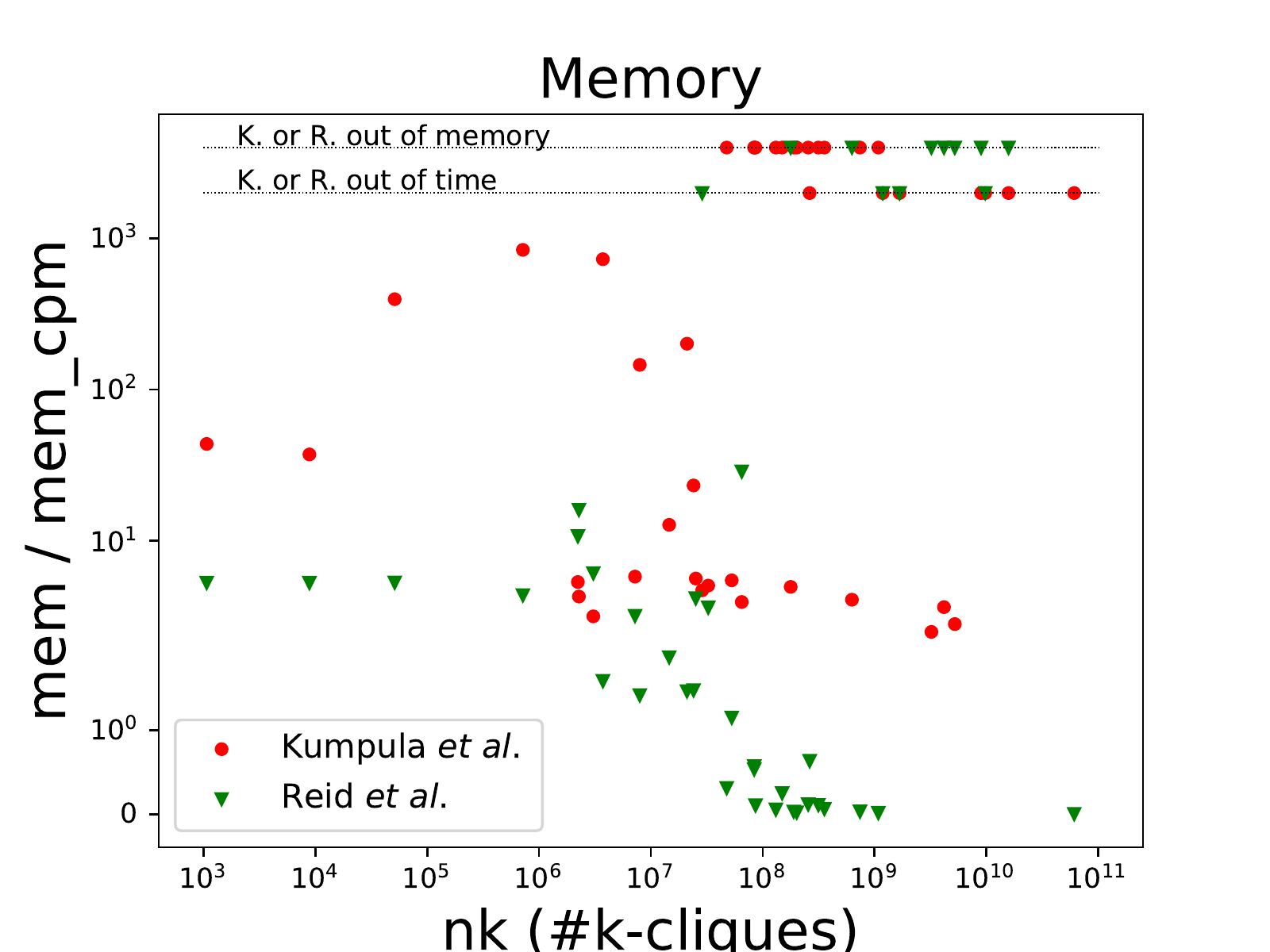} 
  \caption{Comparison between time and memory consumption of the state-of-the-art \cpm methods and
    those from ours. 
    For each competitive algorithm, we plot its running time (resp. memory usage) divided by the
    running time (resp. memory usage) of our \cpm{} algorithm.
    We display the results as a function of $n_k$,
    where $n_k$ is the number of $k$-cliques
    of the input graph.
    The maximum time is limited to 72h and the maximum memory to 390Gb.
    A marker placed at the corresponding line therefore indicates a computation that did not finish,
    because of either time or memory limit.
  }
  \label{fig:cpm-time-mem}
\end{figure}

First notice that the
Reid {\em et al.} algorithm is better than ours for the four smallest graphs
in our dataset (soc-pokec, loc-gowalla, Youtube, zhishi-baidu). Indeed, this algorithm begins by
computing the maximum cliques, then processes them to form communities. In the case of these
small graphs, the maximum cliques are easily computed.
For such graphs, the memory used does not depend on $k$ because in all cases
the maximal cliques are stored;
interestingly, the time computation time decreases with $k$ as only the maximal cliques of size
larger than or equal to $k$ have to be tested for adjacency.

This algorithm is also more efficient than ours in certain configurations, when the graph is already
well segmented into large cliques. This is the case with our DBLP graph, for which the Reid
{\em et al.} algorithm manages to compute the communities in 10 seconds when we need several hours
to process the large number of $k$-cliques.

Notice however that their algorithm does not allow to process the largest graphs of our
dataset. The as-skitter intermediate graph contains too many cliques and their algorithm does not
provide a result in less than 72 hours. For denser graphs (WikiTalk, Orkut, Friendster, LiveJournal),
there are too many maximum cliques for RAM, and the algorithm cannot run, while ours is able
to compute the result. 

Finally, the algorithm of Kumpula {\em et al.} is systematically less
efficient than ours.
Our algorithm is also able to obtain results in cases where no other algorithm can provide any
(see the points on the two horizontal lines on top of the figures).

\subsection{Memory gain of the \cpmz algorithm}

Figure~\ref{fig:cpmz-time-mem} (right) compares the  memory used by our algorithms \cpm and \cpmz with $z = 2,3$.
We show the memory used by \cpmz{} divided by the memory used by \cpm{}, as
a function of $n_k$.
As for the previous figure, we represent cases where \cpmz{} exceeds the time limit on a horizontal line on top of the figure.
In addition, cases where we obtain results with \cpmz{} and not \cpm{} are represented by symbols on a horizontal line at
the bottom of the figure.
For some small graphs, the number of $(k-1)$-cliques, which are stored by \cpm{}, is smaller than the number of $z$-cliques.
For these graphs \cpmz{} requires more memory than \cpm{}.
In most cases however,
we observe a huge memory gain, and in some cases it is even possible to obtain results
unachievable by our \cpm algorithm.

\begin{figure}
  \centering
  \includegraphics[scale=0.37]{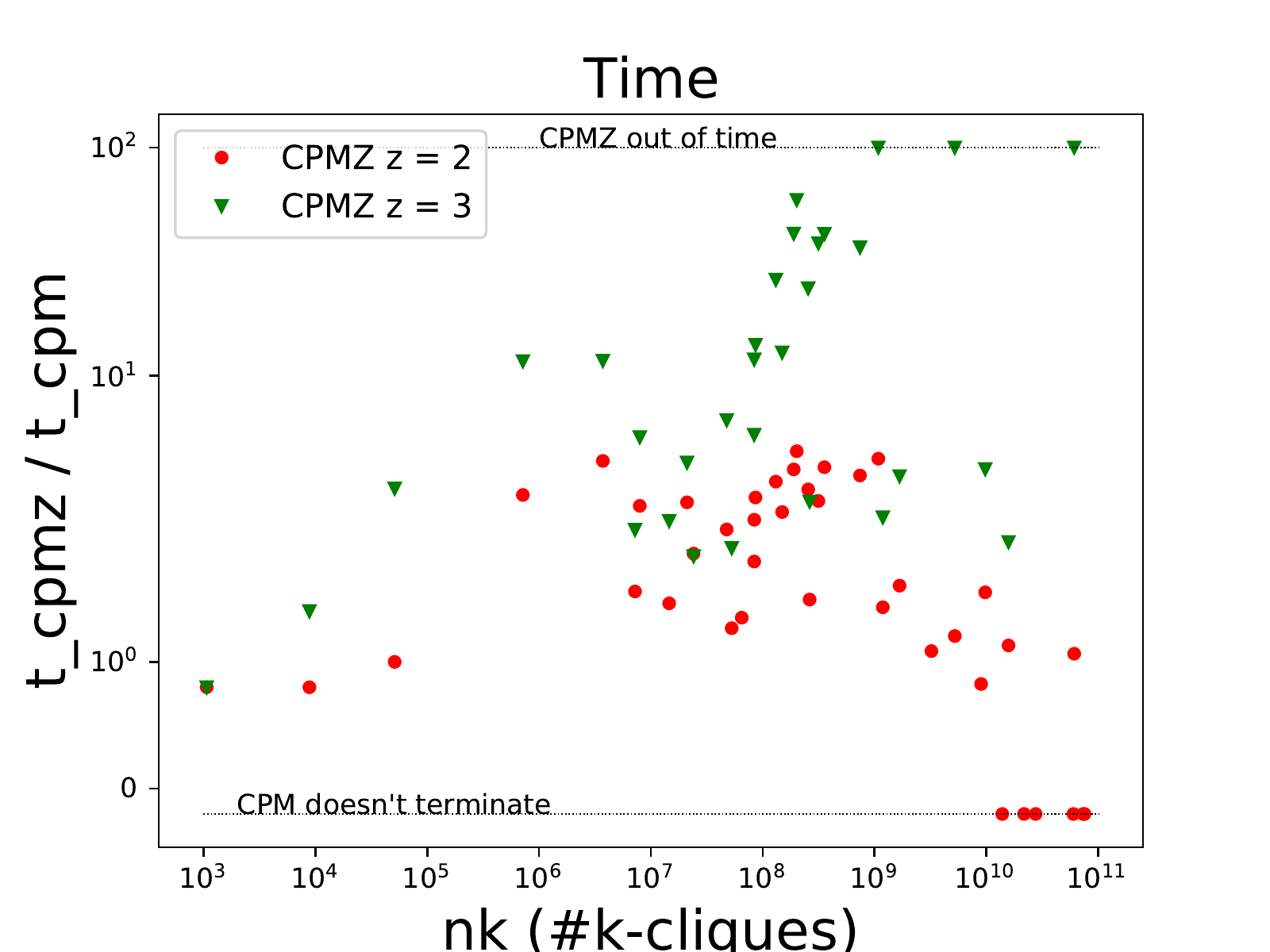}
  \includegraphics[scale=0.37]{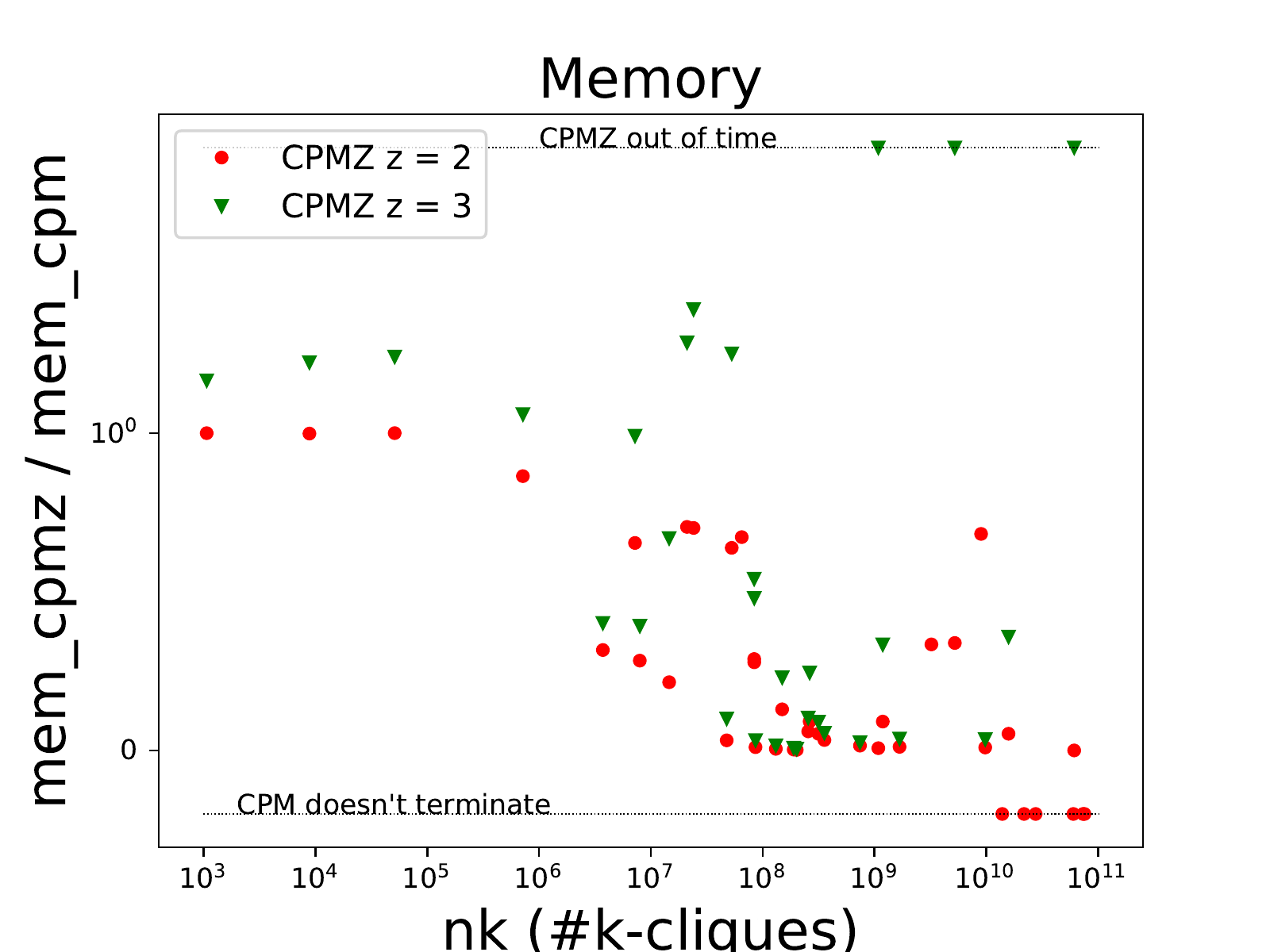}
  \caption{Comparison between time and memory consumption of the \cpmz algorithm and our \cpm.
    For \cpmz with $z=2$ and $z=3$, we plot its running time (resp. memory usage) divided by the running time
    (resp. memory usage) of our \cpm{} algorithm.
    We display the results as a function of $n_k$,
    where $n_k$ is the number of $k$-cliques
    of the input graph.
    The maximum time is limited to 72h, and the memory is not limiting in comparison with \cpm.
    A marker placed at the top line therefore indicates a computation that did not finish, because of time limit.
    A marker placed at the bottom line indicates a computation that finishes for \cpmz but not for \cpm.}
  \label{fig:cpmz-time-mem}
\end{figure}

In addition to this display and to the discussion about
memory requirements in Section~\ref{sec:analysis},
we performed experiments to evaluate the memory associated with each $z$-clique
in the algorithm. To do so, we computed the mean number of Union-Find nodes to which each $z$-clique belongs.
In Algorithm~\ref{algo:cpmz}, we therefore computed the sum the size of all \texttt{Setz[$c_z$]}
every time the algorithm reaches Line~\ref{cpmz:reduce-setz}
Then, we divided this sum by the number of times this line is browsed, which is $n_{k-1} \cdot k \cdot \binom{k-1}{z}$.
We observed that this factor remains low;
for all graphs and all values of $k$ and $z$ it is between 1 and 2, except for the small graphs
of Youtube with a value around 4 for $z=2$ and $k\in [10,15]$, $z=3$ and $k\in [11, 15]$,
and zhishi-baidu with a value around 5 for $z=2$ and \(k=5,6,7\).

Concerning the running time, as discussed in Section~\ref{sec:analysis},
there are two factors that cause \cpmz{} to be slower than \cpm{}.
The first one is the fact that a $z$-clique can belong to several Union-Find nodes.
As we just shown, this number is small in practice and therefore it does not play
a strong role in the running time.
The other factor is the extra $\binom{k-1}{z}$ factor induced by the fact that we
consider all $z$-cliques included in a $k$-clique.
This factor is high and therefore
the computation time remains the limiting factor.
Figure~\ref{fig:cpmz-time-mem} (left) compares it to the running time of \cpm{}.

\subsection{\cpmz communities are very close to \cpm{} communities}
\label{sec:cpmz-precision}

To measure the precision of our algorithm, we compare the agglomerated communities computed
by the \cpmz algorithm with those of the \cpm algorithm. To do so, we use an implementation of a Normalized
Mutual Information measure for sets of overlapping clusters, called {\sc onmi}, provided by McAid
{\em et al.}~\cite{mcdaid2013normalized}.
This tool measures how similar two sets of overlapping communities are.

We carried out the similarity comparisons on all the graphs of our dataset, with all the values of
$k$ for which we can compute the communities with the \cpm algorithm (see Table~\ref{tab:graphs}). We observe
the following:

\begin{itemize}[topsep=0ex,partopsep=0ex]
\item for \cpmz with $z = 2$, the average similarity is $98.6\%$, the median is $99.4\%$ and
  all values are larger than $93.8\%$;
\item for \cpmz with $z = 3$, the average similarity is $99.95\%$, the median is $100\%$ and all values are larger
  than $99.5\%$.
\end{itemize}

This confirms that the incorrect merges between communities performed by the \cpmz algorithm 
have little influence on the final result: the structure of communities is barely
impacted by the \cpmz algorithm. 


\section{Conclusion and discussions}
\label{conc}

In this paper we addressed the problem of overlapping community
detection on graphs through the clique percolation method (\cpm{}).
Our contributions are twofold:
first 
we proposed an improvement in the computation of the exact
result by leveraging a state of the art $k$-clique listing method;
then we proposed a heuristic algorithm called \cpmz{} that provides agglomerated
communities, i.e. communities that are supersets of the exact
communities; this algorithm uses much less memory than the exact
algorithm, at the cost of a higher running time.

Through extensive experimentations on a large set of graphs coming
from different contexts, we show that:

\begin{itemize}[topsep=0ex,partopsep=0ex]
\item our exact \cpm{} algorithm outperforms the state of the
  art algorithms in many cases,
  and we are able to compute the \cpm{} communities in cases
  where it was not possible before;
\item our relaxed \cpmz{} algorithm uses significantly less memory than the
  exact algorithm; even though its running time is higher, this allows us
  to obtain agglomerated communities in cases where no other algorithm
  can run;
\item finally the results provided by the \cpmz{} algorithm have an excellent
  accuracy, according to the {\sc onmi} method and
   obtain a score very close to 1 in the vast majority of cases.
\end{itemize}

Notice however that for the DBLP graph, which is of medium size, the approach proposed
in~\cite{reid2012percolation} works better than ours.
This can be explained because this graph naturally has a strong clique structure.
Indeed, a link exists in this graph if two authors have written a paper together,
and each paper therefore induces a clique on the set of its authors.
In this case computing the maximal cliques and extracting the community out of them
is more efficient than detecting adjacent $k$-cliques.
This raises the interesting question of whether it is possible to predict
which method will be more efficient on a graph by studying this graph's structure.

Several other interesting perspectives arise from our work.
It should be noted that the order in which $k$-cliques are
processed plays an important role
in the incorrect community agglomerations performed by the \cpmz{} algorithm,
that we do not 
yet fully understand.
Our experiments show that in practice only a few merges of
$k$-clique communities happen.
This gives rise to many interesting
graph theoretical questions about the
characterisation of the sub-graphs that can produce an incorrect $k$-clique adjacency detection:
how many of them are there in a typical real-world graph?
We conjecture that it is possible to construct examples in which no
processing order of the $k$-cliques will lead to the exact solution.
However, in many cases including real-world graphs, it is possible
that a certain processing order of $k$-cliques yields results of a
higher quality than other orderings.  This raises the question of how
to design such an ordering.
Another interesting possibility would be
to run the \cpmz{} algorithm with two or more different $k$-cliques
ordering and compare their output: since the \cpmz{} communities are
coarser than the exact \cpm{} communities, it is possible to compare
the communities of both outputs to obtain a better result that
any of the individual runs.

Finally, the linkstream formalism~\cite{latapy:hal-01665084} allows to represent interactions that
occur at different times,
which is a natural framework to represent people meeting at different time during
the week or computers exchanging {\sc ip} packets on the internet.
It would be interesting to investigate the community structure and its temporal aspects in such data
by extending the definition of $k$-clique communities to linkstreams.


\section*{Acknowledgements}

This work was partly supported by projects ANER ARTICO
(Bourgogne-Franche-Comté region), ANR (French National Agency of
Research) Limass project (under grant ANR-19-CE23-0010), ANR FiT
LabCom and ANR COREGRAPHIE project (grant ANR-20-CE23-0002).
We would like to greatly thank Lionel Tabourier for insightful
discussions, useful comments and suggestions, and Fabrice Lecuyer
for his careful proofreading.

\bibliographystyle{splncs04}
\bibliography{biblio} 

\end{document}